\def\titl{Random primes without primality testing}
\def\auts{Pascal Giorgi, Bruno Grenet, Armelle Perret du Cray, Daniel S. Roche}
\newcommand{\doi}[1]{\textsc{doi}: \href{http://dx.doi.org/#1}{#1}}
\newtheorem{theorem}{Theorem}[section]
\newtheorem{fact}[theorem]{Fact}
\newtheorem{lemma}[theorem]{Lemma}
\newtheorem{corollary}[theorem]{Corollary}
\newtheorem{definition}[theorem]{Definition}
\numberwithin{equation}{section}
\newcommand{\ZZ}{\ensuremath{\mathbb{Z}}}
\newcommand{\NN}{\ensuremath{\mathbb{N}}}
\newcommand{\QQ}{\ensuremath{\mathbb{Q}}}
\newcommand{\RR}{\ensuremath{\mathbb{R}}}
\newcommand{\R}{\ensuremath{\mathsf{R}}}
\newcommand{\GF}[1]{\ensuremath{\mathbb{F}_{#1}}}
\newcommand{\bnd}[2]{\ensuremath{#1\mathopen{}\left(#2\right)\mathclose{}}}
\newcommand{\oh}[1]{\bnd{O}{#1}}
\newcommand{\softoh}[1]{\bnd{\widetilde{O}}{#1}}
\newcommand{\floor}[1]{\ensuremath{\left\lfloor#1\right\rfloor}}
\DeclareMathOperator{\llog}{loglog}
\newcommand{\NewModulus}{\textsc{NewModulus}}
\newcommand{\RandUpdate}{\textsc{RandUpdate}}
\newcommand{\RandMod}{\textsc{RandMod}}
\newcommand{\inx}{\ensuremath{\mathbf{x}}}
\newcommand{\origalg}{\ensuremath{\mathcal{A}}}
\newcommand{\transalg}{\ensuremath{\overline{\origalg}}}
\newcommand{\bounds}{\ensuremath{\mathcal{B}}}
\renewcommand{\backref}[1]{Referenced on %
  \expandarg\StrCount{#1}{,}[\ncommas]%
  \ifthenelse{\ncommas = 0}{page~#1}%
  {pages~\StrBefore[\ncommas]{#1}{,}\ and\StrBehind[\ncommas]{#1}{,}}%
.}
\newcommand\auth[4]{%
  \begin{minipage}{.45\textwidth}%
  \centering
      #1\\%
      \normalsize
      #2\\%
      #3\\%
      #4
  \end{minipage}%
}
\title{\titl}
\author{%
\auth{Pascal Giorgi}{LIRMM, Univ. Montpellier, CNRS}{Montpellier, France}{pascal.giorgi@lirmm.fr}
\auth{Bruno Grenet}{LIRMM, Univ. Montpellier, CNRS}{Montpellier, France}{bruno.grenet@lirmm.fr}
\\[3em]
\auth{Armelle Perret du Cray}{LIRMM, Univ. Montpellier, CNRS}{Montpellier, France}{armelle.perret-du-cray@lirmm.fr}
\auth{Daniel S. Roche}{United States Naval Academy}{Annapolis, Maryland, U.S.A}{roche@usna.edu}
}
\begin{document}
\maketitle
\begin{abstract}
  Numerous algorithms call for
  computation over the integers modulo a ran\-domly-chosen large prime.
  In some cases, the quasi-cubic complexity of selecting a
  random prime can dominate the total running time. We propose a new
  variant of the classic D5 algorithm for ``dynamic evaluation'', applied
  to a randomly-chosen (composite) integer. Unlike the D5 principle
  which has been used in the past to compute over a direct product of
  fields, our method is simpler as it only requires following a single
  path after any modulus splits. The transformation we propose can apply
  to any algorithm in the algebraic RAM model, even allowing
  randomization. The resulting transformed algorithm avoids any
  primality tests and will, with constant positive probability, have the same
  result as the original computation modulo a randomly-chosen prime. As
  an application, we demonstrate how to compute the exact number of
  nonzero terms in an unknown integer polynomial in quasi-linear time.
  We also show how the same algorithmic transformation technique can be
  used for computing modulo random irreducible polynomials over a finite
  field.
\end{abstract}
                   
\section{Introduction}

Consider the following situation which arises commonly in exact
computational problems: We have a problem over the integers $\ZZ$ to
solve, but perhaps due to intermediate expression swell or the need for
exact divisions, solving direcly in $\ZZ$ or $\QQ$ is infeasible. There
may be fast algorithms for this problem over a finite field, say modulo a
prime $p$, but then some (unknown) conditions on the prime $p$ must be
met in order for the solution over $\GF{p}$ to coincide with the actual
integer solution. Typically, $p$ must not be a divisor of some unknown
(but bounded) large value.

Then a classical approach, called the \emph{big-prime technique}, is to
randomly choose a large prime $p$, with sufficient bit-length to
overcome any non-divisibility conditions with high probability, solve
the problem over $\GF{p}$, and return the result; see
\citep[Chapter 5]{Gathen:2013}.

When the computation itself is expensive, we can
ignore the cost of prime number generation for practical purposes; the
running time to compute $p$ will be dwarfed by the computations within
$\GF{p}$ that follow.
But as the computation becomes more efficient, and particularly for
algorithms that have quasi-linear complexity, the cost of prime number
generation is more significant, and may be a performance bottleneck in
theory and/or in practice.

The most efficient Monte Carlo method to
generate a random prime number is to sample random integers of the
required size and then test them for primality. To find a $b$-bit prime,
using fast arithmetic, each primality test costs $\softoh{b^2}$ time,
and because the density of primes among all $b$-bit numbers is
proportional to $1/b$, the total cost of prime generation in this way is
$\softoh{b^3}$. Faster practical techniques by \citep{Mau95,JPV00}
incorporate many clever ideas but do not improve on this cubic
complexity bound; see \citep[Chapter 10]{Shoup:2008}.

In summary: the cubic cost of generating a large probable-prime may
dominate the total cost of the big-prime method when the needed primes
are quite large and the mod-$p$ algorithm is quite fast. The aim of this
paper is to tackle this issue with a general technique that uses
random moduli, \emph{not necessarily prime}, with various extra checks
along the way, to provably get the same result in many cases as would be
achieved by explicitly generating a random prime.

An analogous situation occurs for polynomials over finite fields: if the
original problem is in a fixed finite field $\GF{q}$, then in many cases
one needs to compute over an extension field $\GF{q^k}$ for $k$
sufficiently large. This then poses the challenge of computing a random
irreducible degree-$k$ polynomial in $\GF{q}[x]$, which again takes at
least quasi-cubic time using the best current methods. We will show that
our same basic algorithm transformation technique for integers applies
in this case as well, with similar probability bounds.

\subsection{Algorithmic transformation technique}

Our method builds on the long
line of techniques known as \emph{dynamic evaluation}, or the D$^5$
principle
\citep{DDDD85}. This is a very
general technique which has since been employed for a wide range of
computational problems
\cite{duval89,lazard91,DR94,mr95,DMSX06,Noro06,DL10,HL20,Poteaux21}.
For our purposes the idea is to start
computing modulo a possibly-composite $m$, and ``split'' the evaluation
with an efficient GCD computation whenever we need to test for zero or
perform a division.

Like the recent \emph{directed} evaluation technique proposed by
\citet{HL20}, we opt to first take the larger-size branch in any GCD
splitting. But unlike their method (and prior work) which is ultimately
focused on recovering the correct result over the original product of
fields, here the goal is only to have an answer which is
consistent with what it \emph{would have been} in some randomly-chosen
finite field. For that reason, we can ignore the smaller branch of any
split, avoiding the reconstruction process altogether.

In terms of running time, the key observation is that GCDs can be
computed in quasi-linear bit complexity using the half-GCD algorithm of
\citep{Moenck73,BGY80}, and therefore this transformation has the same
soft-oh bit complexity as it would to compute over an actual prime of
the desired size. We save because there is no longer a need to actually
generate (and test for) the random prime.

Proving the probabilistic correctness of this approach is the main
challenge and contribution of our paper. First we need good estimates on
the probability that a random integer has a large prime factor. In our
technique, having a single prime factor $p\divides m$ where $p^2\ge m$
is necessary and sufficient.

Second, and more challengingly, we need to allow for a general model of
computation where the algorithm may sample uniformly from the random
field it is computing over, and probabilistic correctness over a
randomly-chosen field $\GF{p}$ should carry over to probabilistic
correctness of our method with random not-necessarily-prime moduli.

To our knowledge, previous applications of the D$^5$ principle have
considered only computational models which are deterministic and do not
allow random sampling of field elements. \citet{HL20} suggest that this
can be overcome by providing the deterministic computation tree with a
``pool'' of pre-selected random field elements, and the recent paper of
\citet{NSSV21} mentions this limitation and also skirts around it by
providing pre-selected ``random'' choices as additional algorithm
inputs.

But this pre-selection does not really make sense in our setting, where
the initial modulus $m$ itself is randomly chosen as well, as we must do
in order to avoid ``unlucky'' choices of the underlying finite field $\GF{p}$.
Which is to say, for any \emph{fixed}, pre-selected value of $m$ and
``random'' elements modulo $m$, there is no way to argue that the result
will be correct with high probability; but it is also impossible to
choose uniform-random elements modulo $m$ or $p$ without knowing the
modulus in advance. Instead, we take care to actually allow
randomization within the transformed algorithm, and prove that
probabilistic correctness modulo most sufficiently large primes $p$
does indeed imply probabilistic correctness modulo a large-enough random
integer $m$ (and equivalently with random extension fields over a fixed
finite field).

Our optimization to compute on only one branch in fact adds new wrinkles
to the challenge of proving correctness. As a small illustration,
consider a very simple algebraic algorithm which simply chooses a random
field element and tests whether it is zero. Over $\GF{2}$ there should
clearly be a $\frac{1}{2}$ probability of each outcome. But if we
instead compute with initial modulus $m=30$ and use our ``largest
branch'' splitting technique, whenever $2$ divides the final modulus,
there is only a $\frac{1}{3}$ chance of getting zero over $\GF{2}$.
The reason this
can occur is that the branches, and hence the choice of which
modulus to use at the end, \emph{may themselves depend on previous
random choices}. In fact it is not hard to construct pathological
algorithmic examples which are usually correct modulo some certain-sized
primes $p$, but usually \emph{incorrect} modulo $m$ when $m$ is a multiple of
$p$. Overcoming such issues in a proven and generic way is a major
challenge of the present investigation.

\subsection{Application to sparsity determination}

As an important application of our technique to avoid primality testing
in randomized computation, and indeed our original motivation for this
work, we develop a new algorithm to compute the sparsity of an unknown
black-box polynomial.

This Monte Carlo randomized
algorithm uses samples of a \emph{modular black box}, via which an unknown
sparse integer polynomial $f\in\ZZ[x_1,\ldots,x_n]$ can be evaluated for
any chosen modulus $m\in\NN$ and point
$(\theta_1,\ldots,\theta_n)\in[0,m)^n$, as well as bounds $H$ and $D$
on the height and max degree, respectively, to determine the number of
nonzero terms $\#f$ in the unknown polynomial, correct with high
probability. The bit complexity (accounting for black box evaluations)
is softly-linear in the size $f$; see \cref{thm:getsparsity-z} for a precise
statement.

This problem is closely related to the more general problem of sparse interpolation
\cite{BenorTiwari:1988,KaltofenLakshman:1988,KaltofenTrager1990,gs09,JM10,Kal10a,AGR15,HL15,arn16,Huang:2019},
where all coefficients and exponents of $f$ are to be recovered.
Oftentimes such algorithms assume they are given an upper bound
$T\ge\#f$ and have running time proportional to this $T$, so having a
fast way to determine $\#f$ exactly can be valuable in practice.

Our method mostly follows the \emph{early termination} strategy by
\citet{KL03}, which was the first efficient sparse interpolation algorithm
not to require an \emph{a priori} upper bound $T\ge\#f$. We save on the
running time by explicitly stopping the algorithm early as soon as $\#f$
is learned, and avoiding costly later steps which require special field
structure.

Moreover, while the algorithm of \citep{KL03} works over a more general
domain, its bit complexity over $\ZZ$ is exponentially large; thus, a
``big prime'' technique is commonly employed; see
\cite{kaltofen90,Kal10a,JM10}.
But now that we have
reduced the arithmetic complexity to quasi-linear, the cubic bit-cost of
large prime generation becomes significant. This is where our new
algorithm transformation technique comes into play: we use our new
methods to obtain the same results as if working modulo a random prime,
while actually computing modulo a random composite number with at most
twice the bit length.

\subsection{Summary of contributions}

The main results of this paper are:
\begin{itemize}
  \item A new variant of the D$^5$ principle which focuses on computing
    modulo random primes rather than in a given product of fields
    (\cref{ssec:trans});
  \item A careful analysis which shows that any algorithm which is
    \emph{probably} correct for \emph{most} random, sufficiently-large
    primes, can be solved without the cost of prime number generation
    using our new technique (\cref{ssec:analysis});
  \item A new algorithm with nearly-optimal bit complexity to determine
    the number of nonzero terms of an unknown sparse integer polynomial
    (\cref{sec:sparsity}); and
  \item An adaptation of the same techniques to computing modulo random
  irreducible polynomials over finite fields without irreducibility
  testing (\cref{sec:poly}).
\end{itemize}

\section{Prime density and counting bounds}\label{sec:primes}

In this section we review some mostly-known results on the number of
primes in intervals with certain properties, that will be needed for the
probabilistic analysis of our main results.

Throughout, we use the notation $[a,b)$ to denote the set of integers
$n$ satisfying $a\le n<b$, and we use the term \emph{$b$-bit integer} to mean an integer in
the range $[2^{b-1},2^b)$. Note that there are $2^{b-1}$ integers with
bit-length $b$.

\begin{definition}
  For any positive integers $m$ and $b$, we say $m$ is $b$-fat if
  $m$ has a prime divisor $p \ge 2^b$.
\end{definition}

This follows the definition of $M$-fat in the classic paper of
\citet{KR87}, except that we focus only on power-of-two bounds.

We first prove that at least half of all $2b$-bit integers have a prime factor with at
least $b$ bits, largely
following \citep[Lemma 8]{KR87}, which in turn is based on bounds from
\citet{RS62}.

\begin{lemma}\label{lem:bfat}
  For any $b \ge 1$, the number of $(2b)$-bit integers which are $b$-fat
  is at least $2^{2b-2}$.
\end{lemma}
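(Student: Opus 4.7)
The plan is to partition the $b$-fat $2b$-bit integers by their (unique) large prime factor and use Mertens-type estimates on the resulting sum. The first observation is that any $2b$-bit integer $n<2^{2b}$ has \emph{at most one} prime divisor $p\ge 2^b$, because two such divisors would force $n\ge 2^{2b}$, contradicting $n<2^{2b}$. Consequently each $b$-fat $2b$-bit integer is counted exactly once in
\[
  \sum_{\substack{p\text{ prime}\\ 2^b\le p<2^{2b}}}
    \#\bigl\{\,n\in[2^{2b-1},2^{2b}) : p\mid n\,\bigr\},
\]
and this sum equals the total number of $b$-fat $2b$-bit integers.

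Next I would lower-bound the inner count. The multiples of $p$ in an interval of length $2^{2b-1}$ number at least $\lfloor 2^{2b-1}/p\rfloor \ge 2^{2b-1}/p - 1$, so the total is at least
\[
  2^{2b-1}\!\!\sum_{\substack{p\text{ prime}\\ 2^b\le p<2^{2b}}}\!\frac{1}{p}
  \;-\;\bigl(\pi(2^{2b})-\pi(2^b)\bigr).
\]
The main analytic step is to bound the partial sum of reciprocals from below. By the explicit Mertens-type estimates of Rosser--Schoenfeld (the same tool invoked in \citep[Lemma 8]{KR87}),
\[
  \sum_{\substack{p\text{ prime}\\ 2^b\le p<2^{2b}}}\frac{1}{p}
  \;=\;\ln 2 \;+\; O\!\left(\tfrac{1}{b}\right),
\]
which exceeds $\tfrac{1}{2}$ comfortably once $b$ is large enough, while $\pi(2^{2b})-\pi(2^b)=O(2^{2b}/b)$ is negligible compared with $2^{2b-1}\cdot\tfrac12 = 2^{2b-2}$. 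Combining these gives a lower bound of at least $2^{2b-2}$ $b$-fat integers for all sufficiently large $b$.

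The main obstacle I anticipate is not the asymptotic argument but verifying the bound cleanly for \emph{all} $b\ge 1$, since the Rosser--Schoenfeld error terms and the subtracted $\pi$-term could conceivably swamp the main term at very small $b$. My plan is to use the explicit (non-asymptotic) Rosser--Schoenfeld inequalities so that the $\ge 1/2$ bound on $\sum 1/p$ holds from some small threshold $b_0$, and then to dispatch the finitely many cases $b<b_0$ by direct enumeration (as a sanity check: $b=1$ gives $\{2,3\}\supseteq$ all $b$-fat so $2\ge 1$; $b=2$ gives $5\ge 4$; $b=3$ gives $19\ge 16$). This should match the proof strategy of \citep[Lemma 8]{KR87}, which I expect the authors to follow.
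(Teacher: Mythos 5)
Your plan is sound and, once the explicit Rosser--Schoenfeld constants are filled in, it does prove the lemma --- but it takes a genuinely different route from the paper. Both arguments share the same skeleton: a $2b$-bit integer has at most one prime factor $p\ge 2^b$, so the $b$-fat count is an exact disjoint sum over large primes. From there, however, the paper does no analytic estimation of its own. It quotes Lemma~8 of \citep{KR87} as a black box (at least $2^{2b-1}$ integers in $[1,2^{2b}]$ are $b$-fat) and then adds a purely combinatorial ``halving'' step: for each prime $p\ge 2^b$, if $kp$ is the largest multiple of $p$ below $2^{2b-1}$, then exactly $k$ multiples have fewer than $2b$ bits, while $2kp<2^{2b}$ supplies at least $k$ further multiples with exactly $2b$ bits; summing over the partition shows at least half of the $b$-fat integers up to $2^{2b}$ lie in the top dyadic half, and the cases $1\le b\le 6$ are checked numerically. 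You instead re-derive the density directly on $[2^{2b-1},2^{2b})$ via $\sum_{2^b\le p<2^{2b}}1/p=\ln 2+O(1/b)$ with a $\pi$-term correction. This works and even yields the asymptotically stronger density $\ln 2\approx 0.693$ (versus the paper's $\tfrac12$) among $2b$-bit integers; the price is needing explicit Mertens and Chebyshev bounds, and note that the right comparison is the $\pi$-term against the surplus $(\ln 2-\tfrac12)\,2^{2b-1}\approx 0.097\cdot 2^{2b}$, not against $2^{2b-2}$ itself, which pushes the analytic threshold to roughly $b\ge 10$, so your finite verification must cover $b$ up to about $9$ (on the order of $10^5$ integers) rather than the paper's $b\le 6$ --- still an easy computation, and your spot checks for $b=1,2,3$ are correct.
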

\begin{proof}
  The claim is verified numerically for $1 \le b \le 6$.

  If $b\ge 7$, then
  Lemma 8 of \citep{KR87} tells us that the number of $b$-fat integers
  in the range $[1,2^{2b}]$ is at least $2^{2b-1}$.

  We need to show that at least half of the $b$-fat integers are in top
  half of this range.

  For any prime $p$ with $p\ge 2^b$, consider the multiples of $p$ in
  the range $[1,2^{2b}]$. By definition, all such multiples are $b$-fat.
  Let $k\in\NN$ so that $kp$ is the
  largest multiple of $p$ less than $2^{2b-1}$. Thus exactly $k$
  multiples of $p$ have bit-length strictly less than $2b$.
  And because $2kp < 2^{2b}$, there are at least $k$ multiples of $p$
  with exactly $2b$ bits. Incorporating the fact that $2^{2b}$ is never
  a multiple of $p$, we see that at least half of the multiples of $p$
  in the range $[1,2^{2b}]$ have bit-length exactly $2b$.

  Because any number less than $2^{2b}$ can only be divisible by at most
  one prime $p\ge 2^b$, the sets of multiples for $2^b \le p < 2^{2b}$
  in fact form a partition of the $b$-fat numbers with at most $2b$
  bits. Therefore, summing over all sets in this partition, we see that
  the total number of $b$-fat integers with $2b$ bits is at least
  $2^{2b-2}$.
\end{proof}

Many algorithms which perform computations modulo a random prime in fact
need to avoid a certain number of unlucky or ``bad'' primes. Here we
give an upper bound on the chance that a $b$-fat number is divisible by
a large bad prime. The proof is trivially just dividing the range by
$p$.

\begin{lemma}\label{lem:pdiv-prob}
  For any $b\ge 1$ and prime $p\ge 2^b$, at most $2^{b-1}$ integers with
  bit-length $2b$ are multiples of $p$.
\end{lemma}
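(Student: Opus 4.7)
The plan is a direct count: bound the number of multiples of $p$ lying in the interval $[2^{2b-1}, 2^{2b})$ of $2b$-bit integers, essentially by dividing its length $2^{2b-1}$ by $p$.

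Concretely, I would write every such multiple as $kp$, where $k$ ranges over a contiguous set of integers $k_1 \le k \le k_2$ with $k_1 = \lceil 2^{2b-1}/p\rceil$ and $k_2 = \lfloor (2^{2b}-1)/p\rfloor$, so the number of multiples equals $k_2 - k_1 + 1$ (taken to be $0$ if $k_1 > k_2$). The key estimate is
\[
  (k_2 - k_1)\,p \;=\; k_2 p - k_1 p \;\le\; (2^{2b}-1) - 2^{2b-1} \;=\; 2^{2b-1} - 1,
\]
obtained from the defining inequalities for $k_1$ and $k_2$. Combining with the hypothesis $p \ge 2^b$ yields $k_2 - k_1 \le (2^{2b-1}-1)/2^b < 2^{b-1}$, and since the left side is an integer this forces $k_2 - k_1 \le 2^{b-1} - 1$, giving the claimed bound of $2^{b-1}$ on the count.

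I do not anticipate any real obstacle here; the only thing to be careful about is converting a strict inequality into an integer bound. The estimate is in fact tight at $p = 2^b$, where the multiples $2^{b-1}p,\, (2^{b-1}+1)p,\, \ldots,\, (2^b-1)p$ give exactly $2^{b-1}$ integers with bit-length $2b$, so no slack in the argument can be recovered for free.
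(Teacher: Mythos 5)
Your proof is correct and is exactly the argument the paper intends: the paper dismisses this lemma with the one-line remark that it follows ``trivially just dividing the range by $p$,'' and your careful count of multiples $kp$ in $[2^{2b-1},2^{2b})$ simply makes that division rigorous. (Only your closing tightness remark is slightly off for $b\ge 2$, since $p=2^b$ is then not prime, but this does not affect the proof.)
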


\section{Computational model}\label{sec:model}

Our main result is a transformation which, roughly speaking, takes any
algorithm in the algebraic RAM model for any chosen prime $p$,
and converts it into a randomized algorithm in the
(non-algebraic) RAM model that produces the same output most of the time
while avoiding prime number generation.

Here we must take care to define the requirements on the initial
algebraic algorithm. Prior work such as
\citep{DDDD85,DMSX06,HL20}
considered more general settings beyond computing in random finite
fields, but in fairly restricted models of deterministic algebraic
computation such as straight-line programs or computational trees.
Here we have a more restricted algebraic setting but a more general
computational one, allowing for loops, memory, and pseudorandom integer or
field element generation.

\subsection{Modular PRNGs}\label{ssec:prng}

We define a \emph{modular pseudo-random number generator}, or modular
PRNG, as a pair of deterministic algorithms:

\begin{itemize}
  \item $\RandMod(s,m) \to x$ takes a fixed-length state $s$ and any
  positive integer $m$ and produces
  a pseudorandom value $x$ uniformly distributed from the range $[0,m)$.
  \item $\RandUpdate(s) \to s'$ takes the current state and produces a
  value (with the same bit-length) for the next state.
\end{itemize}

(In practice there would also be an initialization procedure which takes
a smaller seed value to produce the initial state, but this detail is
unimportant for our discussion. That is, we treat the seed as synonymous
with the initial state $s$.)

Conceptually, both of these functions should be indistinguishable from
random.
More precisely, define $r(s,i,m)$ as the $i$'th output modulo $m$ from initial
state $s$, that is,

\(\RandMod(\underbrace{\RandUpdate(\cdots(\RandUpdate}_{i\text{ times}}(s))), m).\)

Then, even with an oracle to compute $r(s,i',m')$ for any tuple (i',m')
not both equal to $(i,m)$, it should be infeasible to distinguish
$r(s,i,m)$ from a truly random output modulo $m$.

\subsection{Algebraic RAM with integer I/O}\label{ssec:aram}

Our model of computation is a classical \emph{random access machine}
(RAM), with an ``integer side'' and an ``arithmetic side''. The integer
side is a normal RAM machine with instructions involving integer inputs
and outputs, and the arithmetic side involves \emph{only} the following
computations with elements of an arbitrary field:

\begin{itemize}
  \item Ring arithmetic: $+$, $-$, $\times$
  \item Multiplicative inverse (which in turn allows exact division)
  \item Pseudo-random generation of a field element
\end{itemize}

The multiplicative inverse computation may fail, e.g., on division by
zero, in which case the algorithm returns the special symbol $\bot$.

There are also two special instructions which take input from one side
and output to the other side:
\begin{itemize}
  \item Conversion of an integer to a field element
  \item Zero-testing of any field element
\end{itemize}

Note that zero-testing and subtraction obviously allow equality
comparisons between field elements, but not ordering them.

We restrict the inputs and outputs to be integers (or output of $\bot$
on error). This is necessary for our use case where the inputs are
actually integers which are then reduced modulo $p$, but in any case is
not a restriction due to the integer-to-field-element conversion
instruction.

For computations over prime fields, we can formally define an
\emph{algorithm} in our algebraic RAM model as having:
\begin{itemize}
  \item Parameters: prime $p$ and initial Modular PRNG state $s$
  \item Input: $\inx \in \ZZ^n$
  \item Output: $y \in \ZZ \bigcup \{\bot\}$
  \item Program: A series of integer instructions as in a normal RAM model and
  arithmetic instructions as defined above, along with labels,
  conditional branches, and memory load/store operations
\end{itemize}

We write $\origalg_{p,s}(\inx)$ for the output $y$ of algorithm
\origalg{} with field \GF{p} and initial PRNG state $s$.

Note that some abuses of the RAM model are possible, particularly for
algebraic algorithms, by using arbitrarily large integers. Rather than
avoid such abuses by, e.g., using a word RAM model,
we merely note that our transformations do not affect the
integer side operations at all, and thus would in principle work the
same under any such model restrictions.
Indeed, our work applies over
most algebraic computational models we know of, such as straight-line
programs, branching programs, multi-tape Turing machines, or the
closely-related BSS model.

\section{Algorithm transformation for random prime fields}\label{sec:trans}

Recall the general idea of our approach, building on \citep{DDDD85,HL20}:
Given an algorithm \origalg{} which works over any prime field, we
transform it into a transformed version \transalg{} that instead samples
a possibly-composite modulus $m$ which \emph{probably} contains a large
prime factor $p\ge \sqrt{m}$.
Any time a comparison or multiplicative inverse occurs, \transalg{} first
performs a GCD of the operand with the current modulus $m$. If a
nontrivial factor $k$ of $m$ is found, then the (unknown) large prime
factor of $m$ must divide $\max(k, m/k)$. Update the modulus accordingly
and continue.

We first fully present our algorithmic transformation, then examine some
thorny issues related to the use of pseudorandom numbers in the
algorithms, and finally prove the correctness and performance bounds
which are the main results of this paper.

\subsection{Transformation procedure}\label{ssec:trans}

We begin with the crucial subroutine
\cref{alg:d5mod} (\NewModulus), which shows how to update the modulus while
carefully ensuring no significant blow-up in bit or arithmetic
complexity.

\begin{algorithm}[htbp]
  \caption{\NewModulus($a,m$)\label{alg:d5mod}}
  \KwIn{Integer $a\in\ZZ$ and modulus $m\in\NN$}
  \KwOut{New modulus $m'\in\NN$}
  \medskip

  $g_1 \gets \gcd(a,m)$ \;
  \If{$g_1^2 > m$}{\KwRet{$g_1$}}
  \Else{
    $b \gets g_1^{\floor{\log_2 m}} \bmod m$ \;
    $g_2 \gets \gcd(b, m)$ \;
    \KwRet{$m/g_2$}
  }
\end{algorithm}

\begin{lemma}\label{lem:d5corr}
  Given any $a,m\in\ZZ$, the integer $m'\in\NN$ returned by
  \NewModulus($a,m$) has the following properties:
  \begin{itemize}[nosep]
    \item $m' \divides m$
    \item $a$ is either zero or invertible modulo $m'$
    \item If $p$ is a prime with $p^2>m$ and $p\divides m$, then
    $p\divides m'$ also
  \end{itemize}
\end{lemma}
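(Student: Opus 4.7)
The plan is to split on the two branches of \NewModulus. Let $g_1 = \gcd(a,m)$, and let $k = \floor{\log_2 m}$ so that $2^k \le m < 2^{k+1}$; a key elementary fact I will use throughout Case 2 is that for every prime $q \divides m$, the $q$-adic valuation satisfies $v_q(m) \le k$ (since $q^{v_q(m)} \le m < 2^{k+1} \le q^{k+1}$).

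\textbf{Case 1: $g_1^2 > m$, so $m' = g_1$.} Divisibility (i) is immediate from $g_1 \divides m$. For (ii), since $g_1 \divides a$ we have $a \equiv 0 \pmod{m'}$. For (iii), assume $p$ is a prime with $p^2 > m$ and $p \divides m$, and suppose for contradiction $p \nmid g_1$. Since $p$ is prime this gives $\gcd(p,g_1)=1$, so $pg_1 \divides m$; but $p > \sqrt{m}$ and $g_1 > \sqrt{m}$ together imply $pg_1 > m$, a contradiction. Hence $p \divides g_1 = m'$.

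\textbf{Case 2: $g_1^2 \le m$, so $m' = m/g_2$ with $g_2 = \gcd(g_1^k \bmod m,\, m) = \gcd(g_1^k, m)$.} Part (i) is immediate since $g_2 \divides m$. The heart of the argument is a valuation-by-valuation analysis of $g_2$: for each prime $q \divides m$, either $q \nmid g_1$ (in which case $v_q(g_2)=0$ and $v_q(m') = v_q(m)$), or $q \divides g_1$ (in which case $v_q(g_1^k) \ge k \ge v_q(m)$, forcing $v_q(g_2) = v_q(m)$ and $v_q(m') = 0$). For (ii), any prime $q \divides m'$ must therefore satisfy $q \nmid g_1$, so $q \nmid a$; since this holds for every prime factor of $m'$, $\gcd(a,m')=1$ and $a$ is invertible modulo $m'$. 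For (iii), the hypothesis $p^2 > m$ forces $v_p(m)=1$. The key move is a contrapositive argument on the branch: if $p$ divided $a$, then $p \divides g_1$ would give $g_1 \ge p > \sqrt{m}$, i.e.\ $g_1^2 > m$, contradicting our assumption that we are in Case 2. Hence $p \nmid a$, so $p \nmid g_1$, so by the dichotomy above $v_p(m') = v_p(m) = 1$, i.e.\ $p \divides m'$.

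The main obstacle I anticipate is getting (iii) clean in Case 2: the temptation is to argue directly that $g_2$ avoids $p$, but the cleanest route is the contrapositive on $g_1$ that crosses between the two branches, leveraging precisely the hypothesis $p^2 > m$ that we failed to meet on $g_1$. The minor technical point worth stating carefully is the inequality $k \ge v_q(m)$ for every prime divisor $q$ of $m$, which is what allows the single exponent $k = \floor{\log_2 m}$ to simultaneously ``saturate'' all relevant prime powers in one shot, independently of which prime we are looking at.
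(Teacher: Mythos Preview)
Your proof is correct and follows essentially the same approach as the paper's: both split on the two branches of \NewModulus, both use that $g_1\divides a$ in the first branch for (ii), and both rely on the inequality $v_q(m)\le \lfloor\log_2 m\rfloor$ to show that $g_1^{\lfloor\log_2 m\rfloor}$ absorbs the full $q$-part of $m$ whenever $q\divides g_1$. The only cosmetic differences are that you organize (iii) by algorithm branch while the paper organizes it by whether $p\divides g_1$ (the two are equivalent since $p\divides g_1\iff g_1^2>m$ under the hypothesis $p^2>m$, $p\divides m$), and you make the valuation bookkeeping in Case~2 explicit where the paper leaves it implicit.
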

\begin{proof}
  Let $a,m\in\ZZ$ be arbitrary and $g_1, b, g_2$ as in
  the algorithm.

  For the first property, we see that the integer returned is either
  $g_1$ or $m/g_2$, both of which are always divisors of $m$.

  For the second property, consider two cases.
  First, if $g_1^2 > m$ and thus $m'=g_1$ is returned, then $g_1\divides
  a$, so $a$ is zero modulo $m'$.

  Otherwise, let $q\ge 2$ be any common factor of $a$ and $m$
  (if one exists).
  By the definition of gcd, $q\divides g_1$. Now let
  $k\ge 1$ such that $q^k$ is the largest power of $q$ that
  divides $m$. Because $q\ge 2$, $k \le \log_2 m$, and therefore
  $q^k \divides g_1^{\floor{\log_2 m}}$. From the gcd algorithm,
  this means that $q^k \divides g_2$, and thus $q$ does \emph{not}
  divide $m/g_2$. Hence $a$ and $m/g_2$ do not share any
  common factor, i.e., $\gcd(a,m')=1$ as required.

  For the third property, assume $m$ has a prime factor $p$ with
  $p^2>m$.
  If $p\divides g_1$, then $g_1^2>m$ and
  the new modulus $m'=g_1$ returned in the first case
  is be divisible by $p$. Otherwise, if $p$ does not divide $g_1$, then
  $p$ does not divide $g_2$ either, and hence $p\divides m/g_2$.
\end{proof}

\begin{lemma}\label{lem:d5rt}
  For any $a\in\ZZ$ and $m\in\NN$,
  the worst-case bit complexity of Algorithm \NewModulus($a,m$)
  is $\softoh{\log a + \log m}$.
\end{lemma}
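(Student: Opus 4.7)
The plan is to bound each of the (at most) five steps of \cref{alg:d5mod} separately using standard soft-linear bit complexity results for fast arithmetic, then sum the contributions.

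First I would handle the initial GCD. Using the half-GCD algorithm of \citep{Moenck73,BGY80}, computing $g_1 = \gcd(a,m)$ costs $\softoh{\log a + \log m}$ bit operations, and by definition $g_1$ divides $m$, so $\log g_1 \le \log m$. If the test $g_1^2 > m$ succeeds (which itself is a single multiplication and comparison of integers of bit-length at most $\log m$, hence $\softoh{\log m}$), the algorithm returns immediately and we are done within the claimed bound.

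Otherwise, the dominant remaining work is the modular exponentiation $b \gets g_1^{\floor{\log_2 m}} \bmod m$. The key observation here is that the exponent $\floor{\log_2 m}$ has value $\oh{\log m}$ and therefore bit-length only $\oh{\log\log m}$. Using repeated squaring with fast modular multiplication, this requires $\oh{\log\log m}$ multiplications modulo $m$, each costing $\softoh{\log m}$ with FFT-based integer multiplication; the total is $\softoh{\log m \cdot \log\log m} = \softoh{\log m}$ (the $\log\log m$ factor is absorbed into the soft-oh). The second GCD $g_2 \gets \gcd(b,m)$ is between two integers of bit-length at most $\log m$, so again $\softoh{\log m}$, and the final exact division $m/g_2$ likewise costs $\softoh{\log m}$.

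Summing these contributions gives the claimed bound $\softoh{\log a + \log m}$. There is no real obstacle here: the only point that warrants a sentence of justification is why the modular exponentiation does not introduce an extra factor of $\log m$. The argument is simply that the exponent is logarithmic in $m$, hence doubly-logarithmic as a bit-length, which is hidden by the $\softoh{\cdot}$ notation; all other steps are standard half-GCD or integer arithmetic calls, whose soft-linear complexity we take as given.
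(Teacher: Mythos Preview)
Your proof is correct and takes essentially the same approach as the paper: both identify the two GCDs and the modular exponentiation as the dominant steps, invoke the half-GCD algorithm for quasi-linear GCD, and use binary powering for the exponentiation. Your version is simply more explicit about the intermediate size bounds and the absorption of the $\log\log m$ factor into the soft-oh, whereas the paper dispatches all three steps in a single sentence.
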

\begin{proof}
  The bit complexity is dominated by the two GCD computations
  and the modular exponentiation to compute $b$. Using the
  asymptotically fast Half-GCD algorithm
  \citep{Sch71,SZ04}, and binary powering for the modular
  exponentiation, all three steps have bit cost
  within the stated bound.
\end{proof}

We now proceed to incorporate the \NewModulus{} subroutine into the
algorithm transformation procedure. Algorithm \origalg{}
will be in the model of an algebraic RAM from \cref{ssec:aram}.
We also require a companion
procedure \bounds{} which takes any input
$\inx\in\ZZ^n$ for \origalg{} and \emph{deterministically} produces a
positive integer $b$, which should be a minimal bit-length of primes to
ensure that \origalg{} produces correct results on input $\inx$ with
high probability. (In most applications we can imagine, \bounds{} is
a simple function of the input sizes and bit-lengths.)

\Cref{alg:trans} details the construction of the transformed algorithm
\transalg{} based on \origalg{} and \bounds{}.

\begin{algorithm}[htbp]
  \caption{\transalg{} produced from \origalg{} and \bounds{}\label{alg:trans}}
  \KwIn{Input $\inx\in\ZZ^n$ and PRNG initial state $s$}
  \KwOut{$y \in \ZZ\bigcup\{\bot\}$}

  $b \gets \bounds(\inx)$ \;
  $m \gets$ pseudorandom $(2b)$-bit integer stored in memory\;
  Proceed with identical instructions of \origalg{}, except:
  \begin{itemize}
    \item All conversions from integers to algebraic values are replaced
      by explicit reduction modulo $m$.
    \item All additions, subtractions, and multiplications on the
      ``algebraic side'' are replaced by integer arithmetic followed by
      explicit reduction modulo $m$.
    \item Any random field element generation instructions are replaced
      by sampling \linebreak a number in the range $[0,m)$ using the modular PRNG.
    \item All zero tests and multiplicative inverses on the ``algebraic side'' for some \linebreak
      value $a$ first call $\NewModulus(a,m)$, update the modulus
      accordingly, 
		  and then perform either a divisible-by-$m$ test or
      modular inverse computation, respectively.
  \end{itemize}
\end{algorithm}

Observe that the transformed algorithm \transalg{} no longer works in
the algebraic RAM model, but handles explicit integers only. More
precisely, each algebraic operation in \origalg{} is replaced with a
constant number of arithmetic or gcd computations on integers with
bit-length at most $2b$.

\subsection{Correlated PRNGs}\label{ssec:corr-prng}

A technical detail of our analysis requires a tight relationship between
the random choices made by \origalg{} and \transalg{} for the same input
and seed value. Even using the same PRNG for both algorithms, there is
no reason to think that a random sample in \origalg{} modulo a prime $p$
would have any relationship to a corresponding random sample in
\transalg{} modulo a multiple $m$ of $p$.

We achieve this by defining a pair of modular PRNGs --- one for
\origalg{} and one for \transalg{} --- both based on the same underlying
high-quality PRNG, and having the desired correlation property.
The constructions are based on three key insights. First, in defining
the transformed algorithm \transalg{} above, we are careful to sample
the initial modulus $m$ \emph{inside} the algorithm, rather than taking
it as input. Second, the PRNG for the algebraic algorithm \origalg{}
also samples a random modulus $m$ even though this is not directly used
in the computations, in order to match random outputs between \origalg{}
and \transalg{}. Third, our correlated PRNG construction doubles the state
size so that this modulus $m$ is chosen completely independently from
the future sampled random values; this allows us later to translate
probabilistic correctness modulo $p$ in \origalg{} to probabilistic correctness
modulo $m$ in \transalg{}.

Before describing the PRNGs in detail, it is important to stress that \emph{this is
purely a proof technique}. While these PRNG constructions are efficient
and realizable, their need is motivated only by the probabilistic
analysis that follows; in reality, we could just use any normal
high-quality PRNG and achieve the same results except for some
pathological algorithms would not arise
in practice.

For what follows, we assume the existence of a high-quality modular PRNG $G$
as defined in \cref{ssec:prng}.

We begin by describing the PRNG for the transformed algorithm
\transalg{} of \cref{alg:trans}, which we call $\overline{G'}$.

We will use two simultaneous instances of the underlying modular PRNG
$G$, and therefore double the state size so that the seed for $\overline{G'}$
can be written $s=(s_0,s_1)$. The
first instance with initial state $s_0$ is used only to generate the
initial modulus $m$ on the first step.

Afterwards, a random sample modulo some integer $t\in\NN$ is generated as
follows. If $t\divides m$, then $\overline{G'}$ first generates a random integer
modulo $m$ using $G$ with the second part of the current state $s_1$, and then
reduces the result again modulo $t$ before returning it. Because
$t\divides m$, this is indistinguishable from (though less efficient
than) randomly choosing an integer modulo $t$ directly.

Otherwise, if $t$ is not a divisor of the original modulus $m$, then
$\overline{G'}$ simply calls $G$ directly with the current state $s_1$ and
the requested modulus.

The PRNG $G'$ for the algebraic algorithm \origalg{} is almost
identical to $\overline{G'}$, except that the value of $m$ is only used \emph{inside the
PRNG}. Indeed, for a random seed and run of the algorithm modulo some
prime $p$, it is unlikely that $p\divides m$, and then this construction
doesn't change the results at all compared to using $G$ with the second
part of the state $s_1$ alone.

The need for this strange PRNG construction is the correlation that
exists between \origalg{} and \transalg{} whenever the seed
$s=(s_0,s_1)$ is the same for both, as captured in the following lemma:

\begin{lemma}\label{lem:prng-equiv}
  Let $s=(s_0,s_1)$ be a PRNG state for $G'$ or $\overline{G'}$, $m$ be the initial $2b$-bit random
  value chosen using $s_0$, and $p,m'\in\NN$ such that $p\divides m'$
  and $m'\divides m$.
  Writing $r$ as the pseudorandom result from $G'$ with current state
  $s$ modulo $p$, and $\overline{r}$ as the result from $\overline{G'}$ with the
  same state $s$ and larger modulus $m'$, then we have
  $r\equiv \overline{r} \bmod p$.
\end{lemma}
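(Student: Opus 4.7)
The plan is to unfold the definitions of $G'$ and $\overline{G'}$ and verify that, under the divisibility chain $p \mid m' \mid m$, the two PRNG invocations take the same ``$t$ divides $m$'' branch and thereby share a single underlying sample from the high-quality PRNG $G$. This is essentially a careful bookkeeping argument rather than a substantive computation.

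First I would observe that the hypothesis $p \mid m'$ together with $m' \mid m$ gives $p \mid m$ by transitivity, so both of the moduli actually requested ($p$ for $G'$, $m'$ for $\overline{G'}$) divide the initial $m$ that was selected from $s_0$. By the definition of $\overline{G'}$, the call with state $s$ and modulus $m'$ therefore falls into the ``$t \mid m$'' branch: it generates $X = G(s_1, m)$ using the second half of the state and returns $\overline{r} = X \bmod m'$. By the definition of $G'$, the call with state $s$ and modulus $p$ analogously falls into its ``$t \mid m$'' branch and returns $r = X' \bmod p$ where $X' = G(s_1, m)$. Because $s_0$ (together with the initial step of each PRNG, which is identical for $G'$ and $\overline{G'}$) deterministically fixes the \emph{same} value of $m$ on both sides, and $G$ is deterministic in its inputs $(s_1, m)$, we have $X = X'$.

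From here I would close the argument by iterated modular reduction: since $p \mid m'$,
\(
r \;=\; X \bmod p \;=\; (X \bmod m') \bmod p \;=\; \overline{r} \bmod p,
\)
which is exactly the claimed congruence.

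The only point requiring care is verifying that the two PRNGs really do invoke the underlying $G$ with identical arguments, namely the same state $s_1$ and the same modulus $m$, in their respective ``$t \mid m$'' branches. This is precisely what motivated doubling the state to $(s_0, s_1)$ in the construction of \cref{ssec:corr-prng}, so that $m$ depends only on $s_0$ and is independent of the subsequent sampling stream driven by $s_1$. Once this alignment is confirmed, the lemma follows with no further work.
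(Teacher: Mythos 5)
Your proposal is correct and is exactly the reasoning the paper compresses into its one-line remark ``the proof follows directly from the PRNG definitions and the divisibility conditions'': since $p\mid m'\mid m$, both $G'$ (queried at $p$) and $\overline{G'}$ (queried at $m'$) take the ``divides $m$'' branch, draw the identical sample $X=G(s_1,m)$, and reduce it modulo $p$ and $m'$ respectively, whence $r=X\bmod p=(X\bmod m')\bmod p=\overline{r}\bmod p$. No gap; this is the intended argument, just written out.
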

The proof follows directly from the PRNG definitions above and the
divisibility conditions given.

Besides this correlation property, we also need to know that using
this constructed
PRNG in the algebraic algorithm \origalg{} does not affect the
probabilistic correctness. Note that the assumption on the underlying
PRNG $G$ is an idealistic one, as any PRNG with fixed seed length cannot
actually produce uniformly random values in an arbitrary range.

\begin{lemma}\label{lem:prng-quality}
  If the underlying PRNG $G$ produces uniformly random values for any
  sequence of moduli,
  then for \emph{any} fixed value of $s_0$, the constructed modular PRNG
  $G'$ also produces uniformly random values for any sequence of moduli.
\end{lemma}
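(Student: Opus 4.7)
The plan is to fix $s_0$ (which in turn fixes the initial modulus $m$ produced from $s_0$) and then reduce the uniformity of $G'$ to the assumed uniformity of $G$ by translating any requested sequence of moduli for $G'$ into a corresponding sequence of moduli for $G$.

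More concretely, suppose an arbitrary sequence of moduli $t_1,t_2,\ldots,t_k$ is requested from $G'$ (with the internal state of $G'$ advancing between requests in the usual way, since $s_0$ is fixed and only the $s_1$-part evolves). Define the translated sequence $u_1,u_2,\ldots,u_k$ for $G$ by
\[
  u_i = \begin{cases} m, & \text{if } t_i \divides m, \\ t_i, & \text{otherwise.} \end{cases}
\]
By construction of $G'$, its $i$'th output $Y_i$ is obtained from the $i$'th output $X_i$ of $G$ on moduli $u_1,\ldots,u_k$ (from state $s_1$) as $Y_i = X_i \bmod t_i$. By the hypothesis on $G$, the values $X_1,\ldots,X_k$ are independent with $X_i$ uniform in $[0,u_i)$.

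It then remains to observe two things. First, if $t_i \nmid m$ then $u_i = t_i$ and $Y_i = X_i$ is uniform in $[0,t_i)$. Second, if $t_i \divides m$, then $u_i = m$ is a multiple of $t_i$, so $[0,m)$ partitions evenly into $t_i$ residue classes modulo $t_i$, each of size $m/t_i$; hence reducing the uniform $X_i\in[0,m)$ modulo $t_i$ yields a value $Y_i$ uniformly distributed in $[0,t_i)$. Independence of the $Y_i$ is inherited from independence of the $X_i$, since each $Y_i$ is a deterministic function of $X_i$ alone (the $t_i$ and $m$ being fixed quantities, not themselves randomized by $s_1$).

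The only mildly subtle point, and the one to be careful about, is the second case: one must note that the choice of whether to apply the ``reduce mod $t_i$'' post-processing depends only on the fixed pair $(t_i,m)$, not on any earlier PRNG output, so we really are composing the assumed-uniform output of $G$ with a fixed, divisor-preserving reduction map. With $s_0$ fixed throughout, there is no feedback from the $Y_j$'s back into the sequence of moduli fed to $G$, which is what makes the independence argument go through cleanly.
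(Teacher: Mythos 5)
Your proposal is correct and follows essentially the same route as the paper's proof: fix $s_0$ (hence $m$), split into the cases $t\divides m$ and $t\nmid m$, and in the first case use that reducing a uniform value on $[0,m)$ modulo a divisor $t$ of $m$ gives a uniform value on $[0,t)$. Your extra remarks about independence across the sequence of requests and the fact that the case distinction depends only on the fixed pair $(t_i,m)$ are a slightly more careful spelling-out of what the paper leaves implicit, but not a different argument.
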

\begin{proof}
  Let $s_0$ be a fixed first half of the seed.
  The value of $s_0$ purely determines what $m$ is chosen, so let
  $m\in\NN$ be that arbitrary value.

  Now consider any modulus $t$ given to $G'$. If $t$ does not divide
  $m$, then $G'$ returns a value from the underlying PRNG $G$, which by
  assumption is uniformly random.

  Otherwise, if $t\divides m$, then $G$ is first sampled for some random
  value $r$ in $[0,m)$. By assumption $r$ takes on any value in this
  range with probability $\frac{1}{m}$. Then because $t$ is a divisor of
  $m$, reducing $r \bmod t$ produces each value in $[0,t)$ with
  probability $\frac{1}{t}$.
\end{proof}

From this, we can draw a crucial conclusion on the probabilistic
correctness of \origalg{} when only $s_1$ is varied.

\begin{corollary}\label{cor:prng-correct}
  For some prime $p$ and input $\inx$, suppose \origalg{}
  produces a given output $y$ with probability $1-\epsilon$ when provided an
  ideal perfectly-random number generator. Then if the underlying PRNG
  $G$ is uniformly random and for any fixed value of $s_0$, \transalg{}
  returns $y$ with the same probability $1-\epsilon$, where the probability is
  over all choices of $s_1$ only.
\end{corollary}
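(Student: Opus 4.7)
The plan is to establish a coupling: for each seed $s_1$, I will show that \transalg{} with seed $(s_0, s_1)$ produces the same integer output as \origalg{} run with PRNG $G'$ on the same seed and prime $p$. \Cref{lem:prng-quality} then gives that, for any fixed $s_0$, $G'$ is a uniform PRNG over random $s_1$, so \origalg{}'s output distribution using $G'$ reproduces its ideal-PRNG distribution and in particular assigns probability $1-\epsilon$ to $y$. The coupling transfers this probability directly to \transalg{}.

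To carry out the coupling, fix $s_0$, let $m$ be the $2b$-bit modulus derived from it, and focus on the relevant regime in which $p \mid m$ with $p^2 > m$ (the case in which \transalg{}'s computation is meaningful for the prime $p$). I proceed by induction over instructions executed in lockstep, maintaining two invariants: (a) \transalg{}'s running modulus $m_i$ satisfies $p \mid m_i$ and $p^2 > m_i$; (b) each algebraic value $\bar a$ in \transalg{}'s memory is congruent modulo $p$ to the corresponding value $a$ in \origalg{}. Ring operations and integer-to-field conversions preserve (b) immediately since $p \mid m_i$. A random-sampling step preserves (b) by \cref{lem:prng-equiv}, which is precisely the property that the correlated-PRNG construction was designed to deliver. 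For a zero test or inversion, \transalg{} first invokes \NewModulus; by \cref{lem:d5corr}, the updated modulus $m_{i+1}$ divides $m_i$ and is still divisible by $p$ (so $p^2 > m_i \ge m_{i+1}$ preserves (a)), while $\bar a$ becomes either zero or invertible modulo $m_{i+1}$. Since $p \mid m_{i+1}$, the test $\bar a \equiv 0 \pmod{m_{i+1}}$ is equivalent to $p \mid \bar a$, which by (b) is equivalent to $a=0$ in \origalg{}; the two algorithms therefore take the same branch, and computed inverses agree modulo $p$.

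The main obstacle is this zero-test step: a priori, $\bar a$ could be a nontrivial zero divisor mod $m_i$, meaning $p \mid \bar a$ yet $\bar a \not\equiv 0 \pmod{m_i}$, which would desynchronize the two executions. \Cref{lem:d5corr} eliminates this danger precisely because after \NewModulus{} the value $\bar a$ is forced to be either fully zero or a unit modulo the reduced $m_{i+1}$, and the divisibility $p \mid m_{i+1}$ then pins the test result to agree with \origalg{}. Once the invariants are maintained throughout execution, both algorithms emit identical integer outputs for every $s_1$; averaging over $s_1$, \transalg{} returns $y$ with probability exactly $1-\epsilon$, as required.
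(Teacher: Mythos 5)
Your proof is correct, but it distributes the work differently from the paper. In the paper this corollary carries no separate proof: it is presented as an immediate consequence of \cref{lem:prng-quality} --- with $s_0$ fixed, the constructed PRNG $G'$ is distributionally identical to an ideal generator, so \origalg{} run with $G'$ over $\GF{p}$ outputs $y$ with probability $1-\epsilon$ over $s_1$; the transfer of this guarantee to \transalg{} is handled only later, by \cref{lem:runequiv} together with the ``lucky $s_0$'' analysis inside the proof of \cref{thm:prob-correct}. Your write-up folds that transfer into the corollary itself: the instruction-by-instruction coupling (invariants $p\divides m_i$ and $p^2>m_i$, values congruent modulo $p$, zero tests and inverses synchronized via \cref{lem:d5corr}, sampling synchronized via \cref{lem:prng-equiv}) is essentially a re-derivation of \cref{lem:runequiv}, restricted to the regime $p\divides m$, $p^2>m$. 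That restriction deviates from the literal wording ``for any fixed value of $s_0$,'' but it is unavoidable: read literally about \transalg{}, the claim cannot hold for unlucky $s_0$ (the surrounding text, which speaks of the correctness of \origalg{}, indicates the statement is really about \origalg{} equipped with $G'$), and your regime is exactly the one in which the paper invokes the corollary inside \cref{thm:prob-correct}. So your route buys a self-contained statement that genuinely concerns \transalg{}, at the cost of duplicating \cref{lem:runequiv}; the details are sound, in particular the key point that after \NewModulus{} the tested value is either zero or invertible modulo the updated modulus, which is precisely what rules out the zero-divisor desynchronization.
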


\subsection{Analysis}\label{ssec:analysis}

We now proceed to the main result of our paper: for any algebraic
algorithm \origalg{} that produces correct results with high probability
for primes of bit-length at least that given by \bounds{},
\cref{alg:trans} produces a randomized algorithm \transalg{} which, with
constant probability, produces the same correct results with the same
number of steps. That is, we can achieve (probabilistically) the same
results as computing modulo random primes, without actually needing to
ever conduct a primality test.

To prove this probabilistic near-equivalence, first define a \emph{run}
of an algorithm as the sequence of internal memory states for given
inputs and seed value (and in the case of an algebraic algorithm, choice
of field).

We first show that any run of the transformed algorithm \transalg{} is
equivalent to a run of the original \origalg{} with the same input and
seed for some choice of $p$. Here and for the remainder of this section,
we assume that \origalg{} and \transalg{} use the constructed PRNGs $G'$
and $\overline{G'}$ as defined in \cref{ssec:corr-prng}.

\begin{lemma}\label{lem:runequiv}
  For any input $\inx$ and seed $s$, consider the resulting run of
  \transalg{}. If $m'$ is the final stored value of $m$ in this run,
  then for any prime factor $p$ of $m'$, an identical run of
  \origalg{} is produced over \GF{p} with the same input \inx{} and seed
  $s$, where all algebraic values from the run of \transalg{} are
  reduced modulo $p$.
\end{lemma}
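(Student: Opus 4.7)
The plan is to proceed by induction on the number of instructions executed, maintaining the invariant that after $i$ steps, (i) the integer-side memory of $\transalg$ and $\origalg$ agree exactly, and (ii) every algebraic value $a$ currently stored by $\transalg$ (as a residue modulo the current modulus $m_i$) satisfies $a \equiv a^{(\origalg)} \pmod{p}$, where $a^{(\origalg)}$ is the corresponding value in the run of $\origalg$ over $\GF{p}$. A preliminary observation is that because \cref{lem:d5corr} guarantees that each invocation of $\NewModulus$ returns a divisor of its modulus input, the sequence of moduli $m_0 \supseteq m_1 \supseteq \cdots \supseteq m'$ is a chain of divisors, so $m' \mid m_i$ for every $i$, and thus $p \mid m_i$ throughout the run. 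This validates reducing modulo $p$ at every intermediate step.

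For the inductive step, I would go instruction by instruction. Integer-side RAM instructions are trivial since they touch only integers which agree by hypothesis. Integer-to-field conversion and ring operations ($+,-,\times$) are handled by the fact that $p \mid m_i$, so reducing first modulo $m_i$ and then modulo $p$ gives the same result as reducing directly modulo $p$. For a random field element sample, I invoke \cref{lem:prng-equiv}: with the same seed $s=(s_0,s_1)$ and corresponding moduli $m_i$ (in $\transalg$) and $p$ (in $\origalg$), the samples produced by $\overline{G'}$ and $G'$ are congruent modulo $p$, which is precisely what the invariant demands.

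The cases that carry the real content are the zero-test and the multiplicative inverse, because these are where both the modulus can change and a branching decision is made; ensuring the branch agreement is the main obstacle. In both cases $\transalg$ first calls $\NewModulus(a,m_i)$ to obtain a new modulus $m_{i+1}$. By \cref{lem:d5corr}, $a$ is either zero or a unit modulo $m_{i+1}$. Since $p \mid m_{i+1}$, I split on whether $p \mid a$. If $p \mid a$, then $\gcd(a,m_{i+1}) \ge p > 1$, so $a$ cannot be a unit modulo $m_{i+1}$ and must therefore be zero modulo $m_{i+1}$: the zero test in $\transalg$ succeeds and the inverse computation returns $\bot$, matching $\origalg$'s behavior over $\GF{p}$. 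If $p \nmid a$, then $a \not\equiv 0 \pmod{m_{i+1}}$, so the zero test fails in both algorithms and the inverse $a^{-1} \bmod m_{i+1}$, reduced modulo $p$, coincides with $a^{-1} \bmod p$, preserving the invariant.

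Finally, I would remark that the number of steps executed by $\transalg$ is identical to that executed by $\origalg$ in this matched run, since every branching decision (all taken from zero tests, inverse failures, and integer-side comparisons on values we have shown to agree) is the same in both runs. Applying the invariant at the final step gives the stated correspondence, including equality of the returned outputs when they are integers (or the symbol $\bot$). The chief subtlety to highlight is the interlocking of three ingredients: the divisibility chain of moduli ensured by \cref{lem:d5corr}, the congruence of PRNG outputs from \cref{lem:prng-equiv}, and the zero-or-unit dichotomy of \cref{lem:d5corr} which is what forces the branching decisions to agree under reduction modulo $p$.
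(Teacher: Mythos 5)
Your proposal is correct and follows essentially the same route as the paper: an instruction-by-instruction equivalence argument using the divisor chain of moduli and the zero-or-unit dichotomy from \cref{lem:d5corr} to force agreement on zero tests and inverses, and \cref{lem:prng-equiv} for the random samples. The only cosmetic difference is that you phrase it as an explicit induction and spell out both directions of the zero-test agreement, which the paper leaves partly implicit.
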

\begin{proof}
  Consider the memory states at some point in the program where they are
  equivalent between the two runs. We show that, no matter the next
  instruction in \origalg{}, the memory states after that instruction
  (and the corresponding instruction(s) in \transalg{} according to
  \cref{alg:trans}) are the still equivalent.

  Any arithmetic-side operations are unchanged in \cref{alg:trans}.

  Because reduction modulo $p$ is a homomorphism, any algebraic
  additions, subtractions, or multiplications also maintain equivalence.

  Let $m$ be the original modulus chosen at the beginning of
  \transalg{}. From the first point of \cref{lem:d5corr}, we know that
  $m'\divides m$, and therefore $p\divides m$ also. The same is true for
  any intermediate value of $m$ in the run.

  This means that any conversion operation in \origalg{}, reducing an
  integer modulo $p$, will be mod-$p$ equivalent to the corresponding
  operation in \transalg{}, reducing modulo the current value of $m$.

  Considering zero-test instructions, let $a$ (resp.\ $\overline{a}$) be the value of
  some algebraic value in the run of \origalg{} (resp.\ \transalg{}).
  If $a = 0$ in \GF{p}, then $p\divides \overline{a}$. Then, because $p$
  divides every modulus value $m$ in the run, $\overline{a}$ is not
  invertible modulo $m$. Then according to the second point in
  \cref{lem:d5corr}, $\overline{a}$ is zero mod $m$, and the zero test will
  have the same result.

  By the same reasoning, any multiplicative inverse instruction will
  also be equivalent between the runs of \origalg{} and \transalg{}, or
  in the case the denominator is zero, both runs will result in $\bot$.

  Finally, by using the correlated PRNGs defined in
  \cref{ssec:corr-prng}, we can apply \cref{lem:prng-equiv} to conclude
  that any random field element generation instruction also results in
  equivalent outputs at the same step of both runs.

  This covers all possible types of instructions and completes the
  proof.
\end{proof}

If the original algorithm \origalg{} is deterministic, this equivalence
of runs is enough to prove correctness of \transalg{}. Indeed, this has
been the assumption in most prior works on the D$^5$ principle, which
also often employ simpler models of computation such as straight-line
programs or deterministic computation trees.

By contrast, we want to allow \origalg{} to be randomized.
First, we affirm that any \emph{deterministic property} of the output is
preserved after our transformation to \transalg{} in the
following lemma, which follows directly from \cref{lem:runequiv}.

\begin{lemma}\label{lem:deterministic}
  Suppose $\inx\in\ZZ^n$ is an input for \origalg{}, and
  $Y\subseteq\ZZ\bigcup\{\bot\}$ is a family of outputs,
  such that for \emph{any} prime $p$ and random seed $s$, the output of
  \origalg{} on input \inx{} is always a member of $Y$. Then the output
  of \transalg{} on input \inx{} is always a member of $Y$ as well.
\end{lemma}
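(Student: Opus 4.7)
My plan is to invoke \cref{lem:runequiv} as essentially a one-liner: the hypothesis on \origalg{} is so strong (holding for \emph{every} prime $p$ and seed $s$) that whichever prime the run-equivalence hands us, the matching output is already forced to lie in $Y$.

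Concretely, I would fix an arbitrary seed $s$, consider the resulting run of \transalg{} on input $\inx$, and let $m'$ be the final stored modulus and $y$ the integer (or $\bot$) produced by that run. Picking any prime $p\divides m'$, \cref{lem:runequiv} guarantees an identical run of \origalg{} on the same $(\inx,s)$ over $\GF{p}$. Because the two runs agree step-by-step on the integer side (algebraic operations never touch that side, and the run-equivalence preserves every memory state), they produce the same integer output $y$. The hypothesis applied to this specific $p$ and $s$ then gives $\origalg_{p,s}(\inx)\in Y$, i.e.\ $y\in Y$, which is exactly the conclusion.

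The main (and I expect only) thing to be careful about is that the argument needs a prime factor of $m'$ to exist, i.e.\ $m'\ge 2$. Inspecting \NewModulus{} shows $m'$ is always a divisor of the initial modulus, and the third clause of \cref{lem:d5corr} only guarantees the survival of primes $p\divides m$ satisfying $p^2>m$; so in principle $m'=1$ could occur for an initial $m$ with no such ``fat'' prime. I expect to handle this either by noting that the lemma is used downstream only for runs whose final modulus does contain a prime (which is exactly the good event analyzed in the probabilistic results to follow), or by allowing $Y$ to contain the trivial outputs produced modulo $1$. Beyond this edge case, the proof is truly immediate from \cref{lem:runequiv}.
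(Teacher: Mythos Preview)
Your approach is correct and matches the paper's, which simply notes that the lemma follows directly from \cref{lem:runequiv}. The $m'=1$ edge case you flag is glossed over in the paper as well; your first suggested resolution (the lemma is only invoked downstream in the $b$-fat case, where \cref{lem:d5corr} guarantees the surviving prime) is the intended reading.
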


The more difficult case is when \origalg{} may return incorrect values.
There are two types of causes for an incorrect result:
when the prime $p$ is
one of a small set of ``unlucky'' values, or when randomly-sampled field
elements in the algorithm are unlucky and produce and incorrect result.
Although many actual algorithms only have one of these two types of
failure, we account for both types in order to have the most general
result.

For convenience of exposition, we will capture these failure modes in
the following definition:
\begin{definition}\label{def:wrong}
  Let $\inx\in\ZZ^n$ be any input for \origalg{}, $k\in\NN$ and
  $\epsilon\in\RR$ with $0\le \epsilon < 1$. We say that \origalg{}
  is $(k,\epsilon)$-correct for input \inx{} if, for all but at most $k$ primes $p$ with
  $p\ge 2^{\bounds(\inx)}$,
  running \origalg{} on \inx{}
  produces a correct output with probability at least $1-\epsilon$.
\end{definition}

The following theorem, which is the main result of our paper,
combines the prevalence of $b$-fat integers with the run-equivalence of
\transalg{} to prove probabilistic correctness.

\begin{theorem}\label{thm:prob-correct}
  Let $\inx\in\ZZ^n$, $k\in\NN$, and $\epsilon\in\RR$ with
  $0\le \epsilon < \frac{1}{2}$, and write $b=\bounds(\inx)$.
  If \origalg{} is $(k,\epsilon)$-correct
  for input \inx{}, then the probability that \transalg{} produces the
  correct output for input \inx{} is at least
  \[\frac{1}{2} - \frac{k}{2^{b-2}} - \frac{\epsilon}{2}.\]
\end{theorem}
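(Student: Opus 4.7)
The plan is to decompose the probability space---jointly over the initial $2b$-bit modulus $m$ sampled by \transalg{} and the second PRNG component $s_1$---into two essentially independent conditions: (i) $m$ has a \emph{good} large prime factor, i.e., some $p \ge 2^b$ with $p \divides m$ and $p$ not among the $k$ unlucky primes of \cref{def:wrong}; and (ii) conditioned on (i), the seed $s_1$ leads \origalg{} on $\inx$ over $\GF{p}$ to a correct output.

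For (i), I would lower-bound the probability that $m$ is $b$-fat by $\tfrac{1}{2}$ using \cref{lem:bfat}, then upper-bound the probability that the (necessarily unique, since $m < 2^{2b}$) large prime factor is ``bad'' by taking a union bound of \cref{lem:pdiv-prob} over the $k$ unlucky primes, giving at most $k/2^b \le k/2^{b-2}$. Subtracting yields $\Pr[\text{(i)}] \ge \tfrac{1}{2} - k/2^{b-2}$.

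For (ii), assume $m$ is $b$-fat with a good large prime factor $p$. Then \cref{lem:d5corr} guarantees $p$ divides every intermediate modulus stored by \transalg{}, so \cref{lem:runequiv} makes the run of \transalg{} on $(\inx,s)$ identical modulo $p$ to a run of \origalg{} on $(\inx,s)$ over $\GF{p}$. Because the algebraic side can only influence the integer output through zero-tests, whose outcomes agree between the two runs, the integer outputs themselves must coincide. Then \cref{cor:prng-correct} shows that, for every fixed $s_0$, this run of \origalg{} returns the correct output with probability at least $1-\epsilon$ over the choice of $s_1$.

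Multiplying the two bounds and using $0 \le \epsilon < \tfrac{1}{2}$ gives the claim: $\Pr[\text{correct}] \ge (1-\epsilon)(\tfrac{1}{2} - k/2^{b-2}) \ge \tfrac{1}{2} - k/2^{b-2} - \epsilon/2$. The main care-point---and the real motivation for the intricate PRNG construction of \cref{ssec:corr-prng}---is the independence between the randomness of $m$ (through $s_0$) and the algorithm's internal sampling (through $s_1$), which is exactly what permits \cref{cor:prng-correct} to be applied uniformly over every $m$ in event (i). Getting that separation of variables right, rather than any particular counting estimate, will be the subtlest step of the writeup.
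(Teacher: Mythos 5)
Your proposal is correct and follows essentially the same route as the paper's proof: the same split into a ``lucky'' choice of $m$ via $s_0$ (combining \cref{lem:bfat} with a union bound from \cref{lem:pdiv-prob} over the $k$ bad primes), the same use of \cref{lem:d5corr,lem:runequiv} to tie each run of \transalg{} to a run of \origalg{} over $\GF{p}$, and the same application of \cref{cor:prng-correct} over $s_1$ before multiplying the two bounds. The independence of $s_0$ and $s_1$ that you flag as the subtle point is exactly the role the correlated PRNG construction plays in the paper's argument.
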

\begin{proof}
  Let $m$ be the initial prime modulus chosen uniformly
  in the range $[2^{2b-1},2^{2b})$ by \transalg{}. Two things can make $m$ an unlucky
  choice: if it has no large prime divisor (i.e., if it is not $b$-fat),
  or if its largest prime divisor is one of the $k$ unlucky primes that
  cause \origalg{} to fail.

  \Cref{lem:bfat} tells us that the probability of the former is at most
  $\frac{1}{2}$. And, disjointly, the probability that $m$ \emph{does} have a
  large prime factor but it is one of the $k$ unlucky choices for
  \origalg{} is at most
  $k/2^{b-1}$.

  Therefore, over all choices of the first part of the PRNG initial
  state $s_0$, at least $1/2 - k/2^{b-2}$ of them lead to an $m$ with a
  prime factor $p\ge 2^b$ which is not one of the $k$ ``unlucky'' primes
  for \origalg{} on this input.

  For such ``lucky'' choices of $s_0$ and thereby $m$,
  because $p^2>m$, from \cref{lem:d5corr} we know that $p$ will
  always divide the updated modulus $m$ after any call to \NewModulus{}, and in
  particular, $p$ will divide the final modulus $m'$.
  We can therefore apply \cref{lem:runequiv} with the same $p$ for
  \emph{all} possible runs of \transalg{} with the same $s_0$.

  Finally, considering the remaining part of the PRNG initial state
  $s_1$, \cref{cor:prng-correct} tells us that in these cases
  \transalg{} produces the correct result with probability at least
  $1-\epsilon$, conditional on the previously-derived chance that $m$ is
  ``lucky''.
\end{proof}

When combined with an efficient verification algorithm, \cref{thm:prob-correct}
immediately yields a Las Vegas randomized algorithm.

But without an efficient verifier, the result seems to be not very
useful: it proves that \transalg{} is a Monte Carlo
randomized algorithm with success probability strictly less than
one-half.

Still, we can combine \cref{thm:prob-correct} with the preceding
\cref{lem:deterministic} in the case of algorithms \origalg{} which have
\emph{one-sided error}, meaning that, for any prime $p$ and initial PRNG
state $s$, the output $y$ from \origalg{} is never larger (or,
equivalently, never smaller) than the correct answer.

\begin{corollary}\label{cor:one-sided}
  Let $\mu>0$ such that,
  for any input \inx{}, algorithm \origalg{} is $(k,\epsilon)$-correct
  where
  $1 - \epsilon - k/2^{\bounds(\inx)-1} < \mu$.
  If \origalg{} furthermore has only one-sided error, then the correct
  output can be determined with high probability after
  $O(\frac{1}{\mu})$ runs of \transalg{}.
\end{corollary}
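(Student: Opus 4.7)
The plan is to invoke \cref{thm:prob-correct} to get a per-run success probability bounded below by $\mu$, then combine it with a standard amplification-plus-aggregation argument, leveraging \cref{lem:deterministic} to transfer the one-sided error property from $\origalg$ to $\transalg$.

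First, I would observe that one-sided error for $\origalg$ means that for every prime $p$ and seed $s$ the output lies in the family $Y = \{y \in \ZZ\cup\{\bot\} : y \le y^\star\}$ (or the analogous $y \ge y^\star$ family), where $y^\star$ is the true answer on input $\inx$. \Cref{lem:deterministic} then immediately implies that every single output of $\transalg$ also lies in $Y$: any one run of $\transalg$ is either $\bot$ or an underestimate (respectively, overestimate) of $y^\star$. Second, the hypothesis on $\mu$ is precisely the statement, via the lower bound $\tfrac{1}{2}-\tfrac{k}{2^{b-2}}-\tfrac{\epsilon}{2}$ from \cref{thm:prob-correct} with $b=\bounds(\inx)$, that one execution of $\transalg$ returns the exact correct value $y^\star$ with probability at least $\mu$.

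The amplification step is then straightforward: run $\transalg$ independently $N=\lceil c/\mu\rceil$ times with fresh PRNG seeds, and aggregate by taking the maximum (in the $y\le y^\star$ case) or minimum (in the $y\ge y^\star$ case) of all non-$\bot$ outputs, returning $\bot$ if every trial did. By independence, the probability that no trial recovers $y^\star$ is at most $(1-\mu)^N\le e^{-c}$, which can be driven below any fixed constant by choosing $c$ large enough (and more generally below any target $\delta$ at cost $O(\log(1/\delta)/\mu)$ runs). Conditional on at least one trial returning $y^\star$, the one-sided guarantee forces the aggregated value to equal $y^\star$ exactly, since every other output is on the same side of $y^\star$.

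The main obstacle, such as it is, is purely bookkeeping: verifying that the one-sided property transfers cleanly through the algorithmic transformation (which is precisely the content of \cref{lem:deterministic}) and checking that the special $\bot$ symbol is correctly excluded from the max/min aggregation. No new probabilistic machinery is required beyond what \cref{thm:prob-correct} and \cref{lem:deterministic} already supply.
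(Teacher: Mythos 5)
Your proposal is correct and follows essentially the same route as the paper, whose proof is just a one-line sketch of the same idea: run \transalg{} repeatedly and take the maximum (resp.\ minimum) output, relying on the one-sided error to make the aggregate correct as soon as a single run succeeds. Your version simply spells out the details the paper leaves implicit --- invoking \cref{lem:deterministic} to transfer one-sidedness to \transalg{}, \cref{thm:prob-correct} for the per-run success probability of order $\mu$, the $(1-\mu)^N$ failure bound over independent seeds, and the exclusion of $\bot$ from the aggregation.
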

\begin{proof}
  Because of one-sided error, you can repeatedly run \transalg{} and
  take the maximum (resp.\ minimum) result if the output of \origalg{}
  never larger (resp.\ smaller) than the correct output.
\end{proof}

\section{Computing the sparsity of integer polynomials}\label{sec:sparsity}

Sparse polynomial interpolation is an important and well-studied problem in computer algebra: Given an
unknown polynomial $f\in \R[x_1,\dots,x_n]$ through a blackbox or a \emph{Straight Line Program} (SLP), one wants to recover the non-zero
coefficients of $f$ and their corresponding exponents. Of course, the main goal is to have an algorithm with a complexity that is
quasi-linear in the bit-length of the output, that is $\softoh{n t (\log d + \log h)}$, where $t=\#f$ is the number of non-zero terms, $d$ the maximal
degree and $h$ the height, i.e., largest absolute value of any coefficient.

The fastest algorithm for this task are of two kinds, depending of the model in which the polynomial is given. For Straight Line Programs, following the deterministic polynomial time algorithm of \citet{gs09}, many improvements have been made through
randomization to reach a quasi-linear complexity in every parameter of $f$, i.e., $t$, $\log d$ and $\log h$ \cite{Huang:2019}. For a polynomial given instead by a \emph{black box} for its evaluation,
following the seminal papers of \citet{BenorTiwari:1988} and \citet{KaltofenLakshman:1988},
we have now reached a quasi-linear complexity in the sparsity of $f$, see Arnold's PhD thesis \citep{arn16}. Note that an
algorithm with quasi-linear complexity in \emph{all} the parameters of $f$ in this model is still not available.

One of the main ingredients of these algorithms is that they require bounds for every parameter $(t,d,h)$ of $f$.  A few
works considered to replace these bounds with explicit randomized algorithms.  As outlined in \cite{KaltofenTrager1990,arn16}, one
can calculate such a degree bound in polynomial time but this might dominate the cost of the interpolation. The situation is
clearly different for the sparsity parameter as interpolation with early termination exists \cite{KL03}. This is only the case
with Prony-style interpolation, popularized by \citet{BenorTiwari:1988} for polynomials given as a blackbox.

In this work we will consider the model of \emph{Modular Blackbox} (MBB) that allows to control the size of the evaluation of the
polynomial. In particular, this is of great interest to efficiently deal with sparse polynomials over the integers as one
evaluation might be exponentially large than the polynomial itself.

Kronecker substitution \cite{Kronecker1882} is a fairly classical tool to reduce multivariate problems to univariate ones.  It is
easy to see that this transformation does not change the size of the polynomial and its sparsity. Therefore, we will only focus on
the univariate case here for simplicity of presentation.

\def\getSparsity{\textsc{GetSparsity}}
\subsection{Sparsity over a sufficiently large field}

First we develop a Monte Carlo algorithm to compute $\#f$ over a sufficiently-large
finite field \GF{q}.
For this, we can use \citet{BenorTiwari:1988}
and the extension of \citet{KL03} that study the probability that some early zeros appear
during the course of the Berlekamp-Massey algorithm.

Taking a random $\alpha\in \GF{q}$ and $a_i=f(\alpha^i) \in \GF{q}$, it is shown that for $s=1,\dots,t$ all Hankel
matrices $H_s=[a_{i+j}]_{i,j=0}^{s-1}$ are non-singular with a probability greater than $1-\frac{D t (t-1)(t+1)}{3q}$; see
\citep{KL03,Grigorescu:2010,arn16}. The so-called early termination strategy for sparse interpolation is then to run
the Berlekamp-Massey algorithm on the infinite sequence $(a_0,a_1,a_2,\dots)$ and to stop the algorithm whenever a zero discrepancy
occurs. It is showed in \citep{KL03} that the zero discrepancy corresponds exactly to hitting a singular Hankel matrix $H_s$.
Since the sequence $(a_1,a_2, \dots)$ corresponds to the evaluation of a $t$-sparse polynomial at a geometric sequence, the
minimal generator $\Lambda$ of the recurrence sequence $(a_0,a_1,a_2,\dots)$ has degree exactly $\#f$
\cite{BenorTiwari:1988}.

\begin{fact}\label{lem:getsparsity-moq}
  Given a blackbox for $f\in\GF{q}[X]$ with $q\ge 16D^4$ where $D>\deg f$, there exists a Monte Carlo algorithm 
  that computes an integer $t$ such that $t\le\#f$. With probability at least $1-\frac{1}{48}$,
  we have $t=\#f$ exactly.  The computation requires $2t$ probes
  to the modular blackbox and $\softoh{t}$ arithmetic operations in $\GF{q}$.
\end{fact}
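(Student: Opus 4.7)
The plan is to instantiate the early-termination Berlekamp--Massey strategy described in the preceding paragraph, choose parameters so that the Kaltofen--Lee singularity bound drops below $\tfrac{1}{48}$, and then invoke a fast Berlekamp--Massey routine to get the quasi-linear arithmetic cost.

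Concretely, the algorithm samples $\alpha \in \GF{q}$ uniformly, queries the modular blackbox to form the geometric evaluation sequence $a_i = f(\alpha^i)$ for $i = 0,1,2,\ldots$, and runs Berlekamp--Massey on the stream. Following \citep{KL03}, processing is halted at the first index where a zero discrepancy is observed; this is also the first $s$ for which the Hankel matrix $H_s$ is singular. The algorithm outputs $t = \deg \Lambda$, where $\Lambda$ is the current minimal generator at the moment of termination, and it has consumed exactly $2t$ evaluations of the blackbox.

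The inequality $t \le \#f$ is then a structural property of Berlekamp--Massey with early termination: because the true evaluation sequence of $f$ along the geometric progression $(\alpha^i)$ is generated by a linear recurrence of order $\#f$ (namely by the polynomial whose roots are the $\#f$ distinct values $\alpha^{e}$ for exponents $e$ appearing in $f$), the degree of any generator produced before $2\#f$ values have been read is at most $\#f$. For the probabilistic part, I would invoke the cited bound that with probability at least $1 - \frac{D t (t-1)(t+1)}{3q}$ all Hankel matrices $H_1,\ldots,H_{\#f}$ are nonsingular, in which case the first zero discrepancy appears precisely at step $\#f + 1$ and hence $t = \#f$. Using $t \le \#f \le \deg f + 1 \le D$ to bound the numerator by $D^4$ and the hypothesis $q \ge 16 D^4$, this failure probability is at most $\frac{D^4}{3 \cdot 16 D^4} = \frac{1}{48}$, giving the claimed $1 - \frac{1}{48}$ success probability.

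For the complexity claim, the $2t$ probes are immediate from the stopping rule. The non-trivial part is obtaining $\softoh{t}$ arithmetic operations in $\GF{q}$ rather than the naive $O(t^2)$; here I would appeal to the well-known half-GCD based implementation of Berlekamp--Massey, which computes the minimal generator of a length-$2t$ sequence in $\softoh{t}$ field operations. The main obstacle in the write-up is less the probabilistic estimate (which is a direct calculation from the Kaltofen--Lee bound) and more the careful justification that running half-GCD-based Berlekamp--Massey \emph{incrementally} with early termination still achieves $\softoh{t}$ rather than $\softoh{T}$ for any pre-chosen upper bound $T$; this can be handled by the standard doubling trick, rerunning the fast algorithm on sequence prefixes of geometrically increasing length and checking for a length-$s$ generator whose next predicted term matches the next observed sample.
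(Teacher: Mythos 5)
Your probability calculation, one-sided-error argument ($t \le \#f$ from the fact that the true characteristic polynomial of degree $\#f$ already generates every prefix), and overall plan all match the paper's reasoning, which likewise reduces to the Kaltofen--Lee Hankel nonsingularity bound and the observation that $D\,t(t-1)(t+1) < D^4 \le q/16$.

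Where you diverge from the paper is in how the $\softoh{t}$ arithmetic cost with early termination is realized. The paper relies on the iterative order-basis algorithm \textsf{iPM-basis} of Giorgi and Lebreton, which processes the evaluation stream online as a Pad\'e approximation problem and detects the KL03 early-termination event directly by watching for a vanishing constant term in the residual. You instead propose a generic doubling trick over half-GCD Berlekamp--Massey. Both give $\softoh{t}$ field operations, but there are two costs to your route worth flagging. First, reading samples in geometrically increasing batches and then testing whether the current generator predicts the next sample consumes up to roughly $4t$ rather than the $2t$ probes claimed in the statement; the online algorithm the paper cites advances one sample at a time and so meets the $2t$ bound exactly. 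Second, your stopping criterion (``the length-$s$ generator predicts the next observed sample'') is not literally the first-zero-discrepancy criterion of KL03; the two agree once all of $H_1,\dots,H_t$ are nonsingular, but under failure they may stop at different indices, so the precise one-sided guarantee and probe count should be argued for the criterion you actually implement. These are repairable details, not fundamental gaps, but the paper sidesteps them by citing an off-the-shelf online algorithm tailored to exactly this setting.
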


We note that there is nothing special about the constant 16; this just arises from what we need
later, and it is convenient to have a very low probability of error.

If $f$ were given by a straight-line program instead of a blackbox, the same algorithm may
be employed with bit complexity $\softoh{L\#f\log q}$, where $L$ is the length of the SLP.

Correctness comes from the previous discussion on the probability that the Hankel matrices $H_s$ are non-singular up to $s=t$;
see \cite[Theorem 9]{KL03} for a complete proof.  For the complexity, we can use the fast iterative order basis algorithm of
\cite{GioLeb14} since the Berlekamp-Massey algorithm is related to Padé approximant involving the series $\sum_{i>0} a_ix^i$
\cite{Dornstetter87}.  The algorithm \textsf{iPM-basis} from \cite{GioLeb14} provides a fast iterative variant for Padé
approximation that can incorporate the early termination strategy (looking for a zero constant term in the residual, denoted
$F^{v}$).

One may remark that
the algorithm of \cref{lem:getsparsity-moq}
is a \emph{one-sided} randomized algorithm; the returned value $t$ never exceeds the true sparsity
$\#f$.

\subsection{Sparsity over the integers}

Now suppose $f\in\ZZ[x]$ is an integer polynomial given via a modular blackbox,
along with bounds $D,H$ such that $\deg f < D$ and each coefficient of $f$ is bounded by
$H$ in absolute value.
We want to use the techniques of \cref{sec:trans} to adapt the algorithm of \cref{lem:getsparsity-moq}
to find the sparsity of $f$.

The first question is how to incorporate the modular blackbox into the algebraic RAM
model of \cref{ssec:aram}. We will say that the algebraic RAM is endowed with an additional
instruction to probe the MBB: given any \emph{algebraic} value $\alpha$, the MBB instruction
returns a new algebraic value for $f(\alpha)$. In the original algorithm \origalg{}, each MBB
evaluation will be modulo $p$, and in the transformed algorithm \transalg{}, evaluations will
be modulo the current value of $m$.

Observe that this functionality is exactly what is already specified in the definition of a modular
black box. Importantly, we do \emph{not} require the MBB to be given in any particular computational
model (such as algebraic RAM), and the instruction transformations described in \cref{alg:trans} will
\emph{not} apply inside the blackbox itself.

The next question is how large the prime $p$ should be to ensure correctness with high
probability. \Cref{lem:getsparsity-moq} gives a lower bound for $p$ so that the
mod-$p$ algorithm succeeds, but we also need to ensure that the sparsity of $f$ modulo $p$
is the same as the actual value of $\#f$ over the integers. This will be true as long as none
of the coefficients of $f$ vanish modulo $p$. Then we simply observe that, with bounds $D,H$ on
the degree and height of $f$ respectively, the number of ``bad primes''
which cause the sparsity to drop modulo $p$ is at most $D\log_2 H$.

Set $b=\lceil 4 + 4\log_2 D + \log_2\log_2 H\rceil$. Then any $p\ge 2^b$ satisfies
the condition of \cref{lem:getsparsity-moq}, so we can say the algorithm is
$(D\log_2 H, \frac{1}{48})$-correct by \cref{def:wrong}. The following theorem, our main
result for this section, follows immediately after observing that $D\log_2 H / 2^{b-2} < \frac{1}{4}$.

\begin{theorem}\label{thm:getsparsity-z}
  Given a MBB for $f\in\ZZ[x]$ and bounds $D,H$ with $\deg f < D$ and each coefficient of $f$
  is at most $H$ in absolute value, there exists a Monte Carlo randomized algorithm that computes
  an integer $t$ such that $t\le \#f$. With probability at least $0.239$, we have
  $t=\#f$ exactly. The computation requires $2t$ probes to the MBB and $\softoh{t}$ arithmetic
  operations, all with moduli that have bit-length $\oh{\log D + \llog H}$.
\end{theorem}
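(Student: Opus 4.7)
The plan is to apply \cref{thm:prob-correct} to the mod-$q$ algorithm of \cref{lem:getsparsity-moq}, with parameters calibrated so that the bit-length threshold $b$ simultaneously absorbs the \GF{q}-correctness requirement and the set of ``bad'' primes coming from $f$'s coefficients. Concretely, I take $\origalg$ to be the algorithm of \cref{lem:getsparsity-moq}, placed in the algebraic RAM model by regarding the MBB probe as an extra primitive that returns $f(\alpha)$ from an algebraic argument $\alpha$. Define $\bounds(f,D,H) := b = \lceil 4 + 4\log_2 D + \log_2\log_2 H\rceil$, so that $2^b \ge 16 D^4 \log_2 H$; in particular this exceeds the threshold $16 D^4$ demanded by \cref{lem:getsparsity-moq} for every prime $p\ge 2^b$.

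Next I bound the two failure modes of $\origalg$ on a prime $p\ge 2^b$. The randomized Berlekamp--Massey step fails with probability at most $\tfrac{1}{48}$ by \cref{lem:getsparsity-moq}. Separately, $p$ is a \emph{bad prime} if it divides some nonzero coefficient of $f$, in which case the sparsity modulo $p$ may strictly drop. Each nonzero coefficient has absolute value at most $H$ and so at most $\log_2 H$ distinct prime divisors, and there are fewer than $D$ nonzero coefficients, bounding the total number of bad primes by $D\log_2 H$. In the sense of \cref{def:wrong}, this shows $\origalg$ is $(D\log_2 H,\tfrac{1}{48})$-correct on every valid input.

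Applying \cref{thm:prob-correct} with $k=D\log_2 H$ and $\epsilon=\tfrac{1}{48}$, the key numerical estimate is
\[\frac{k}{2^{b-2}} \;\le\; \frac{D\log_2 H}{4 D^4 \log_2 H} \;=\; \frac{1}{4D^3} \;\le\; \frac{1}{4},\]
so $\transalg$ returns $\#f$ with probability at least $\tfrac{1}{2} - \tfrac{1}{4} - \tfrac{1}{96} = \tfrac{23}{96} > 0.239$. The one-sided guarantee $t \le \#f$, which holds deterministically in $\origalg$ for every prime and every seed, carries over to $\transalg$ by \cref{lem:deterministic}. For the complexity, the transformation in \cref{alg:trans} leaves the number of MBB probes ($2t$) and \GF{p}-arithmetic operations ($\softoh{t}$) unchanged; each algebraic operation becomes an integer or \NewModulus{} computation on a $(2b)$-bit modulus, costing $\softoh{b} = \softoh{\log D + \llog H}$ bit operations by \cref{lem:d5rt}. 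There is no real obstacle beyond this parameter bookkeeping, since the substantive probabilistic content has already been isolated in \cref{thm:prob-correct}; the only point requiring any care is verifying that the choice of $b$ makes the bad-prime contribution $k/2^{b-2}$ comfortably below $1/4$ while keeping $b$ itself in $\oh{\log D + \llog H}$.
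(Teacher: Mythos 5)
Your proposal matches the paper's argument essentially line for line: same choice of $\origalg$ with the MBB as an added algebraic instruction, same bound $b=\lceil 4 + 4\log_2 D + \log_2\log_2 H\rceil$ giving $2^b\ge 16D^4\log_2 H$, same bad-prime count $D\log_2 H$ yielding $(D\log_2 H,\tfrac{1}{48})$-correctness, and the same invocation of \cref{thm:prob-correct} with the estimate $k/2^{b-2}\le 1/4$ to land at $23/96>0.239$. The paper leaves the final arithmetic implicit ("follows immediately after observing that $D\log_2 H/2^{b-2}<\tfrac{1}{4}$"); you spell it out, and you correctly attribute the one-sidedness to \cref{lem:deterministic}, so there is no gap.
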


Because the error is again one-sided, \cref{cor:one-sided} applies
and can be used to make the success probability arbitrarily high.

If the MBB for $f$ is in fact a straight-line program of length $L$, the total bit complexity becomes
$\softoh{Lt(\log D + \llog H)}$. While this is technically \emph{sub}-linear in the bit-length of
$f$ itself, we note that this is somewhat ``hiding'' some computation in the evaluation
model itself, since to actually produce a polynomial with degree $D$ and height $H$ with
bounded constants, the length $L$ of the SLP would need to be at least
$\Omega(\log D + \log H)$.

\section{Random irreducible polynomials without irreduci\-bility testing} \label{sec:poly}

In this section, we adapt our approach to computing in a field extension of a fixed finite field
$\GF q$. This need arises frequently in settings where the base field $\GF q$ is too small, and one needs
to find more than $q$ distinct elements in it. In that case, the standard approach is to compute a
random irreducible polynomial $\varphi$ of degree $s$ and to work within $\GF{q^s}=\GF{q}[x]/\langle
\varphi\rangle$. If the algorithm that is run in $\GF{q^s}$ is fairly fast, the cost of producing an
irreducible polynomial of degree $s$, $\softoh{s^3\log q}$, may become predominant.

We show how to adapt our techniques to compute modulo an arbitrary random polynomial. Many aspects
are very similar to the integer case, so we highlight only the main differences. We begin with the polynomial counterpart of
the notion of $b$-fat integers given in \cref{sec:primes}. 

\begin{definition}
  For any positive integer $d$ and finite field $\GF q$, 
  a polynomial $f\in\GF q[x]$ is said to be $d$-fat if
  it has an irreducible factor of degree $> d$.
\end{definition}

\begin{lemma}\label{lem:dfat}
  For any $d$, the number of degree-$2d$ monic polynomials over $\GF q$ that are $d$-fat is
  at least $\frac{1}{4}q^{2d}$.
\end{lemma}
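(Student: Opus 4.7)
The plan is to mirror the integer proof of \cref{lem:bfat} as closely as possible, using the partition argument suggested by the fact that a degree-$2d$ polynomial can have \emph{at most one} irreducible factor of degree $>d$ (any second such factor would push the total degree above $2d$).

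First I would introduce the counting function $I_k(q)$ for the number of monic irreducible polynomials of degree $k$ over $\GF{q}$. The observation above shows that the set of $d$-fat monic polynomials of degree $2d$ is exactly the disjoint union, over irreducibles $\varphi$ of degree $k$ with $d<k\le 2d$, of the monic multiples of $\varphi$ of degree $2d$. Each such multiple has the form $\varphi\cdot g$ for a unique monic $g$ of degree $2d-k$, so there are exactly $q^{2d-k}$ of them. This gives the exact count
\[
\#\{d\text{-fat monic polynomials of degree }2d\} \;=\; \sum_{k=d+1}^{2d} I_k(q)\,q^{2d-k}.
\]

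Next I would invoke the standard lower bound $I_k(q) \ge q^k/(2k)$, valid for every $k\ge 1$ and every prime power $q\ge 2$. This follows from the Möbius inversion formula $I_k(q)=\tfrac{1}{k}\sum_{e\mid k}\mu(k/e)q^e$ by discarding the negative terms and estimating $\sum_{e\le k/2}q^e \le 2q^{k/2}$; one then checks separately (or absorbs into a small base case) the values $k\in\{1,2,3\}$ where $q^{k/2}$ is not yet negligible. Substituting this bound yields
\[
\sum_{k=d+1}^{2d} I_k(q)\,q^{2d-k} \;\ge\; q^{2d}\sum_{k=d+1}^{2d}\frac{1}{2k}.
\]

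Finally, I would lower bound the harmonic-style sum by its smallest term: $\sum_{k=d+1}^{2d}\tfrac{1}{2k} \ge d\cdot\tfrac{1}{2(2d)} = \tfrac{1}{4}$. This gives the claimed bound $\tfrac{1}{4}q^{2d}$. The only real subtlety is the verification of $I_k(q)\ge q^k/(2k)$ for small $k$, but since $I_1=q$, $I_2=(q^2-q)/2$, $I_3=(q^3-q)/3$, each case is immediate for $q\ge 2$; there is no genuine obstacle, and unlike the integer version no small-$d$ numerical check is needed because the crude harmonic estimate is already tight enough at $d=1$.
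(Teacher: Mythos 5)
Your proof is correct and follows the same structure as the paper's: partition the $d$-fat monic polynomials of degree $2d$ by their unique irreducible factor of degree $>d$, count $q^{2d-k}$ multiples per irreducible, and invoke the bound $I_k(q)\ge q^k/(2k)$ (the paper cites Shoup for this; your sketch of the Möbius-inversion derivation is fine). Where you diverge, slightly, is in the final arithmetic: the paper writes the sum as $\tfrac12 q^{2d}(H_{2d}-H_d)$, invokes the two-sided bound $1/2(n+1)<H_n-\ln n-\gamma<1/2n$, derives $H_{2d}-H_d\ge \tfrac12$ for $d\ge 2$, and checks $d=1$ separately; you instead just bound the $d$ summands $\tfrac{1}{2k}$, $d<k\le 2d$, each from below by the smallest, $\tfrac{1}{4d}$, giving $\ge\tfrac14$ in one line with no case split. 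That is a genuine simplification: it avoids the Euler--Mascheroni estimate and the $d=1$ special case while delivering exactly the same constant. The trade-off is that the paper's calculation also reveals the asymptotically sharper fraction $\approx\ln 2$, which the authors comment on after the lemma, and which your crude bound does not expose.
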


\begin{proof}
  A monic degree-$2d$ polynomial can have at most one monic irreducible factor of degree $>d$.  A
  given monic irreducible polynomial $g$ of degree $\ell > d$ divides exactly $q^{2d-\ell}$ monic
  polynomials of degree $2d$. Moreover, the number of irreducible monic polynomials of degree $\ell$
  over $\GF q$ is at least $q^\ell/2\ell$ for any $\ell$~\cite[Lemma~19.12]{Shoup:2008}. 
  
  Therefore, there are at least $q^{2d}/2\ell$ monic degree-$2d$ polynomials of $\GF q[x]$ that have
  an irreducible factor of degree $\ell$, for $\ell > d$.  Summing from $\ell = d+1$ to $2d$, there
  are at least $\frac{1}{2}q^{2d} (H_{2d}-H_d)$ where $H_n = \sum_{i=1}^n 1/i$ denotes the $i$th
  harmonic number. Since $1/2(n+1)<H_n-\ln(n)-\gamma <1/2n$ \cite{Young:1991}, $H_{2d}-H_d \ge
  \ln(2) + 1/2(2d+1) - 1/2d \ge \frac{1}{2}$ for $d\ge 2$, and $H_2-H_1 = \frac{1}{2}$. The result
  follows.
\end{proof}

Note that the bound of that lemma is smaller than in the integer version, since we only proved that
at least one fourth of the degree-$2d$ polynomials over $\GF q$ are $d$-fat. Actually, the
number of monic irreducible polynomials of degree $\ell$ approaches $q^\ell/\ell$ for large values
of $q$ or $\ell$~\cite{Shoup:2008}. With the same proof, this shows that the fraction of degree-$2d$
polynomials that are $d$-fat approaches $\ln(2)\ge0.693$ for large values of $d$ or $q$.

We now turn to the algorithm transformation. We work in the same algebraic RAM model. We adapt the
definition of an \emph{algorithm} of \cref{sec:model}. The parameter $p$ is replaced by a parameter
$\varphi$ which is an irreducible degree-$s$ polynomial over $\GF q$. This requires to fix a
correspondence between integers and polynomials over $\GF q$. This is easily done by using the
$q$-adic expansion of integers. 

The \NewModulus{} algorithm works similarly \emph{mutatis mutandis}. As input, the algorithm takes a
polynomial over $\GF q$ (or equivalently an integer that represents this polynomial) and a modulus
$m\in\GF q[x]$, and returns a new modulus $m'\in\GF q[x]$. The test ``$g_1^2>m$'' is replaced by a test
``$2\deg(g_1) > \deg(m)$'', and ``$g_1^{\lfloor\log_2 m\rfloor}\bmod m$'' is replaced by
``$g_1^{\deg m}\bmod m$''. The proof of \cref{lem:d5corr} is easily adapted. The worst-case bit
complexity of the adapted algorithm is $\softoh{(\deg a+\deg m)\log q}$.

Finally, the algorithm transformation itself is easily adapted. The bound $\bounds(\inx)$ returns the
minimal degree $s$ of an extension for the algorithm to produce correct results with high
probability.  The transformed algorithm \transalg{} computes $2s$ pseudorandom integers modulo $q$
using the modular PRNG, and interprets them as a monic degree-$2s$ polynomial $m$ over $\GF q$.
Also, conversions from integers to algebraic values are done by interpreting the integer as a
polynomial over $\GF q$ and reducing it modulo $m$. 

The rest of the arguments are similar as in the integer case. Using the bound of \cref{lem:dfat} on
the density of $d$-fat polynomials, we obtain a similar theorem as \cref{thm:prob-correct}: If
\origalg{} is $(k,\epsilon)$-correct of input $\inx$ and $\bounds(\inx)$ returns $s$, the
transformed algorithm \transalg{} produces the correct output for input $\inx$ with probability at
least 
\[\frac{1}{4} - \frac{k}{q^{s+1}} - \frac{\epsilon}{4}.\]
Again, this error probability means the technique is only useful for one-sided Monte Carlo algorithms, or combined with an
efficient verification algorithm.

\newcommand{\Gathen}{\relax}\newcommand{\Hoeven}{\relax}

\end{document}